\pgfplotsset{compat=1.18}
\newcommand{\DML}{D_{\textnormal{ML}}} 
\newcommand*\dif{\mathop{}\!\mathrm{d}}
\def\anonymizeymize{0}
\newcommand{\anonymize}[1]{}
\newcommand{\anonymize}[1]{#1}
\definecolor{darkgreen}{rgb}{0.0, 0.5, 0.0}
\def\comments{1}
\newcommand{\pras}[1]{\textcolor{BrickRed}{\sf{[#1 --PR]}}}
\newcommand{\kangning}[1]{\textcolor{orange}{\sf{[#1 --KW]}}}
\newcommand{\moses}[1]{\textcolor{darkgreen}{\sf{[#1 --MC]}}}
\newcommand{\edit}[2]{\textcolor{red}{#1}}
\newcommand{\pras}[1]{}
\newcommand{\kangning}[1]{}
\newcommand{\moses}[1]{}
\newcommand{\alex}[1]{}
\newcommand{\edit}[2]{#1}
\def\colorful{1}
\DeclareMathOperator{\rank}{rank}
\newcommand{\cg}{\succ}
\renewcommand{\hat}[1]{\widehat{#1}}
\renewcommand{\sf}[1]{\textsf{#1}}
\newtheorem*{rep@theorem}{\rep@title}
\newcommand{\newreptheorem}[2]{
\newenvironment{rep#1}[1]{
 \def\rep@title{#2 \ref{##1}}
 \begin{rep@theorem}\itshape}
 {\end{rep@theorem}}}
\theoremstyle{plain}
\newtheorem*{rep@claim}{\rep@title}
\newcommand{\newrepclaim}[2]{
\newenvironment{rep#1}[1]{
 \def\rep@title{#2 \ref{##1}}
 \begin{rep@claim}\itshape}
 {\end{rep@claim}}}
\theoremstyle{plain}
\newtheorem{Alg}{Algorithm}
\crefname{equation}{eq.}{eqs.}
\begin{document}

\definecolor{myblue}{rgb}{0.15, 0.1, 0.75}
\definecolor{mygreen}{rgb}{0.15, 0.55, 0.1}
\definecolor{mypink}{rgb}{0.75, 0.05, 0.55}

\hypersetup{
    linkcolor = mygreen,
    citecolor = myblue,
    urlcolor = mypink
}

\title{
Approximately Dominating Sets in Elections
}

\author{
\anonymize{
Moses Charikar\\\hspace{0pt}{{\sl Stanford University}}
\and Prasanna Ramakrishnan \\\hspace{0pt}{{\sl Stanford University}}
\and Kangning Wang\\\hspace{0pt}{{\sl Rutgers University}}
}
}
\anonymize{
{\let\thefootnote\relax\footnotetext{Emails: \texttt{moses@cs.stanford.edu}, ~\texttt{pras1712@stanford.edu}, ~\texttt{kn.w@rutgers.edu}.}}
}

\date{}

\pagenumbering{gobble}
\thispagestyle{empty}
\maketitle

\begin{abstract}

Condorcet's paradox is a fundamental result in social choice theory which states that there exist elections in which, no matter which candidate wins, a majority of voters prefer a different candidate. In fact, even if we can select any $k$ winners, there still may exist another candidate that would beat \emph{each} of the winners in a majority vote. That is, elections may require arbitrarily large \emph{dominating sets}.

We show that \emph{approximately} dominating sets of constant size always exist. In particular, for every $\eps > 0$, every election (irrespective of the number of voters or candidates) can select $O(\frac{1}{\eps^2})$ winners such that no other candidate beats each of the winners by a margin of more than $\eps$ fraction of voters.

Our proof uses a simple probabilistic construction using samples from a  \emph{maximal lottery}, a well-studied distribution over candidates derived from the Nash equilibrium of a two-player game. In stark contrast to general approximate equilibria, which may require support logarithmic in the number of pure strategies, we show that maximal lotteries can be approximated with \emph{constant} support size. These approximate maximal lotteries may be of independent interest.

\end{abstract}

\clearpage

\pagenumbering{arabic}

\section{Introduction}\label{sec:intro}

Elections are a pillar of collective decision making. While most commonly associated with representative democracy, the broad paradigm of a group of \emph{voters} choosing from a group of \emph{candidates} can model a wide range of scenarios, from organizations choosing new hires, judges choosing prizewinners, to friends choosing a restaurant or a movie. Recently, researchers have also found ways of using voting theory to inform machine learning, for which human preferences are now commonly used to fine-tune and align models \cite{christiano2017deep,DBLP:conf/nips/Ge0MPSV024,conitzer2024position,dai2024mapping}. 

Despite the enduring applicability of voting, its theoretical study has long been plagued with fundamental challenges. One of the earliest is \emph{Condorcet's paradox} \cite{condorcet1785essai}, dating back to the 18th century. It says that in elections where voters have ranked preferences over the candidates, a candidate that is preferred to any other by a majority of voters (called a \emph{Condorcet winner}) does not always exist. That is, no matter which candidate wins, a majority of voters may prefer another candidate. (In fact, this fraction of voters can be made arbitrarily close to 1.)

Ideally, we could hope to alleviate the paradox by showing that reasonable relaxations of a Condorcet winner \emph{do} always exist. One possibility is that if any winner might upset a large fraction of voters, perhaps we can achieve better outcomes by selecting a \emph{committee} of $k$ winners instead of just one. A few motivating examples include districts with multiple representatives, hiring for many openings, or to choose a shortlist of candidates like nominees or interviewees. From a theoretical perspective, committee selection has been the subject of a rich body of work \cite{fishburn1981analysis,fishburn1981majority,dummett1984voting, gehrlein1985condorcet,barbera2008choose,elkind2017properties}. (See \cite{faliszewski2017multiwinner,lackner2023multi} for more comprehensive surveys.)

A natural relaxation of a Condorcet winner in this setting is the notion of a \emph{dominating set}, for which every candidate outside the set is beaten by \emph{some} candidate in the set.

\begin{definition}[Dominating sets]\label{def:dom-set}
A committee $S$ is a \emph{dominating set} if for all candidates $a \notin S$, there exists $b\in S$ such that at least half\footnote{It is natural to use ``at least half'' instead of ``a majority,'' which would be slightly stronger. Otherwise a dominating set may trivially need to include all candidates if all pairs are tied.} of the voters prefer $b$ over $a$.   
\end{definition}

Unfortunately, minimal dominating sets in elections may be arbitrarily large, with size growing logarithmically in the number of candidates. This fact is a consequence of two classic results: McGarvey's Theorem \cite{mcgarvey1953theorem} which says that every tournament graph can be realized as the majority graph of some election (i.e., a tournament on candidates where  $a\to b$ if a majority of voters prefer $a \cg b$), and a result due to  Erd\H{o}s \cite{erdos1963problem} that there exist tournaments on $m$ vertices where all dominating sets are of size $\Omega(\log m)$. In fact, this is true for a random tournament. 

Instead, a promising recent approach by Elkind, Lang, and Saffidine \cite{elkind2011choosing,elkind2015condorcet} relaxes the notion of a dominating set by allowing the committee to dominate alternatives \emph{collectively}, treating it like a composite candidate which is simultaneously each voter's favorite candidate in the committee. 

\begin{definition}[Condorcet winning sets \cite{elkind2011choosing,elkind2015condorcet}]\label{def:cw-sets}
A committee $S$ is a \emph{Condorcet winning set} if 
for all candidates $a \notin S$, a majority of voters prefer some $b \in S$ over $a$. 
More generally, $S$ is a $\theta$-\emph{winning set} if for all candidates $a \notin S$, at least $\theta$ fraction of voters prefer some $b \in S$ over $a$. 
\end{definition}

Remarkably, a line of recent work proved that every election has a Condorcet winning set of \emph{constant} size. \cite{jiang2020approximately} implicitly showed that size 32 is sufficient, and this was improved to 6 \cite{DBLP:journals/corr/CharikarLRVW25}. These results also extend more generally to larger and smaller committee sizes. Asymptotically, there are always $(1 - O(\frac1k))$-winning committees of size $k$ \cite{jiang2020approximately}, and even for $k = 2$ there are $\theta$-winning committees with $\theta \approx 0.202$ \cite{DBLP:journals/corr/CharikarLRVW25}; bounded away from 0.\footnote{A new preprint \cite{nguyenfew} also improves these results, discussed further in \Cref{sec:related}.}

\vspace{11pt}

However, as \cite{elkind2011choosing,elkind2015condorcet} note, one strange feature of Condorcet winning sets is that they can be quite far from dominating sets. For example, a 1-winning set may exclude a candidate that would beat each member of the committee by a wide margin. To see this, consider an election where voters' first choices are distinct but their second choices are unanimous. The first choices would form a 1-winning set, but for any of these candidates, nearly all voters would prefer the second choice (which is in fact a Condorcet winner in the strongest sense). \cite{elkind2011choosing,elkind2015condorcet} also show empirically that the Condorcet winning sets and dominating sets in random elections are usually very different.

In light of this disparity, it is natural to wonder whether it is possible to have small committees that are close to dominating sets, without the need for collective domination. To this end, we raise the following question considering an  approximation of dominating sets.

\begin{question}\label{q:main}
A committee $S$ is an $\alpha$-\emph{dominating set} if for all candidates $a \notin S$, there exists $b\in S$ such that at least an $\alpha$ fraction of voters prefer $b$ over $a$. For each $\alpha \in [0, 1]$ what is the smallest $k_\alpha$ such that every election has an $\alpha$-dominating set of size at most $k_\alpha$? 
\end{question}

Note that dominating sets are precisely $\frac12$-dominating sets, meaning that $k_\alpha$ is unbounded for $\alpha \geq \frac12$ by \cite{mcgarvey1953theorem,erdos1963problem}. Thus, the region of interest is $\alpha \in [0, \frac12)$.

Using existing work, we can piece together a little more of the picture. Generalizing McGarvey's Theorem, a line of work in combinatorics \cite{stearns1959voting,erdos1964representation,alon2002voting,alon2006dominating} has studied what kinds of tournament graphs are realizable by elections with few voters, or while ensuring edges represent large margins. One can think of an election just as a collection of linear orders, which naturally lends them to a combinatorial treatment. These works are also motivated by the connection between voting paradoxes and \emph{intransitive dice}, which we explore further in \Cref{sec:related}.  

Alon \cite{alon2002voting} showed that every $m$-vertex tournament graph can be realized by an election where for all edges $a\to b$, $\frac12 + \Omega(\frac{1}{\sqrt{m}})$ fraction of voters prefer $a$ over $b$. This result immediately implies that as $\alpha$ tends to $\frac12$, $k_\alpha$ tends to infinity. In particular, $k_{\frac12 - \eps} \geq \Omega(\log \frac1\eps)$. A result of Alon, Brightwell, Kierstead, Kostochka, and Winkler \cite{alon2006dominating} implies a sharper quantitative bound. They prove the existence of elections with $n$ voters, where $n$ is odd, and every dominating set has size $\Omega(n/\log n)$ (which, they show, is tight up to a $\log^2 n$ factor). 
It follows that $k_{\frac12 - \eps} \geq \Omega\Big(\frac{1}{\eps\log \frac1\eps}\Big)$. This bound treats their result as a black box; in \Cref{sec:lbs} we show that their construction implies $k_{\frac12 - \eps} \geq \Omega(\frac{1}{\eps})$.

On the other hand, the recent work on $\theta$-winning sets can be used to obtain positive results for small $\alpha$. It is not hard to see that while any $\alpha$-dominating set is always an $\alpha$-winning set, a $\theta$-winning set of size $k$ is at the very least a $\frac{\theta}{k}$-dominating set. With this, \cite{DBLP:journals/corr/CharikarLRVW25} imply that for for $\alpha \leq 0.101$, $k_\alpha = 2$. Though they give stronger guarantees on $\theta$-winning sets for larger committee sizes, they do not yield stronger guarantees for $\alpha$-dominating sets (due to the factor $k$ dilution).

As such, we have no non-trivial results for larger $\alpha$, let alone $\alpha$ close to $\frac12$, unless we restrict the space of elections in consideration. For example, \cite{alon2006dominating} consider elections of \emph{quality} $\eps$, for which the fraction of voters that prefer any $a$ over any $b$ is outside $(\frac12 - \eps, \frac12 + \eps)$ (i.e., all pairwise margins are reasonably large). In these elections, $(\frac12 -\eps)$-dominating sets and ($\frac12$-)dominating sets are the same. \cite{alon2006dominating} show that restricting to elections of quality $\eps$, we can find dominating sets of size $O(\frac{1}{\eps^2}\log \frac1\eps)$.

\vspace{11pt}

In this work, we show that without any restrictions, $k_\alpha$ is finite for all $\alpha \in [0,\frac12)$. In particular, $k_{\frac12 - \eps} \leq O(\frac1{\eps^2})$.

\begin{theorem}\label{thm:near-dom-sets}
For all $\eps > 0$, every election has a $(\frac12 - \eps)$-dominating set $S$ of at most $(1 + o(1))\frac{\pi}{8\eps^2}$ candidates. In fact, there is a distribution $D$ over $S$ such that for all candidates $a$, in expectation, at most $\frac12 + \eps$ fraction of voters prefer $a$ over $b \sim D$.
\end{theorem}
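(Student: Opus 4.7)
The plan is to take $D$ to be the uniform distribution over $k=(1+o(1))\pi/(8\eps^2)$ i.i.d.\ samples from the \emph{maximal lottery} $D^*$ of the election. Associate to the election the two-player zero-sum game whose antisymmetric payoff is $p(a,b)-p(b,a)$, where $p(a,b)$ is the fraction of voters preferring candidate $a$ to candidate $b$; after a symmetric tie-breaking perturbation, we may assume $p(a,b)+p(b,a)=1$ for $a\neq b$. By von Neumann's minimax theorem, this game has value zero and a symmetric mixed equilibrium $D^*$ satisfying $\mathbb{E}_{b\sim D^*}[p(a,b)]\le \tfrac12$ for every candidate $a$. Thus $D^*$ itself already satisfies the expectation guarantee in the theorem; the only remaining task is to approximate $D^*$ by a distribution whose support has size $O(1/\eps^2)$.

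Let $b_1,\dots,b_k$ be drawn i.i.d.\ from $D^*$, set $S=\{b_1,\dots,b_k\}$, and let $D$ be uniform on $S$. For any fixed candidate $a$, the empirical mean $\hat f_a := \tfrac{1}{k}\sum_i p(a,b_i)$ has $\mathbb{E}[\hat f_a]=f_a(D^*)\le\tfrac12$ and variance at most $\tfrac{1}{4k}$, so standard concentration gives $\hat f_a \le \tfrac12 + \eps$ with high probability for any individual $a$ as soon as $k=O(1/\eps^2)$. The theorem then reduces to showing that $\hat f_a\le\tfrac12+\eps$ holds \emph{simultaneously} for every candidate with positive probability, so that a single realization of the sample yields the desired $S$ and $D$.

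This simultaneity requirement is the main obstacle, and what the abstract flags as the novel contribution: a naive Hoeffding plus union bound---the Lipton--Markakis--Mehta approach---would give only $k=O(\log m/\eps^2)$, which is known to be tight for general zero-sum games. Eliminating the $\log m$ factor must exploit the special structure of the voting game. On the one hand, the antisymmetry $p(a,b)+p(b,a)=1$ bounds $\mathrm{Var}(p(a,b))$ under $b\sim D^*$ by $f_a(D^*)(1-f_a(D^*))$, which vanishes for candidates whose $f_a(D^*)$ is far from $\tfrac12$ and so makes them very easy to handle. On the other hand, complementary slackness together with the averaging identity $\mathbb{E}_{a,b\sim D^*}[p(a,b)]=\tfrac12$ constrains how many candidates can simultaneously be near-tight at $\tfrac12$. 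I would combine a variance-sensitive one-sided tail bound (Bernstein- or Bennett-type) on $\hat f_a-f_a(D^*)$ with an averaging argument that groups candidates by the slack $\tfrac12-f_a(D^*)$, so that a $\log m$-free union bound over near-tight candidates becomes possible. The explicit constant $\pi/8$ strongly suggests that the final rate arises from a Gaussian-style computation of the one-sided deviation $\mathbb{E}\big[(\hat f_a-f_a(D^*))_+\big]$, with the $(1+o(1))$ factor absorbing sub-Gaussian corrections. Once the expected maximum deviation over $a$ is bounded by $\eps$, the probabilistic method concludes and yields the stated set $S$ of size $(1+o(1))\pi/(8\eps^2)$ together with the distribution $D$ on it.
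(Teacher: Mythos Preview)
Your setup is right---sample $k$ times from the maximal lottery and use the probabilistic method---but the step where you avoid the $\log m$ union bound has a genuine gap. Your proposed mechanism (Bernstein-type bounds plus complementary slackness to limit the number of ``near-tight'' candidates) does not work: complementary slackness says every candidate \emph{in the support} of $D^*$ has $f_a(D^*)=\tfrac12$ exactly, and that support can have size as large as $m$. There is no averaging identity that bounds the number of candidates with $f_a$ near $\tfrac12$; in fact those are precisely the candidates with maximal variance, so a candidate-by-candidate concentration argument is stuck at $O(\tfrac{\log m}{\eps^2})$.

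The paper's key idea, which you are missing, is to abandon the per-candidate view entirely and work \emph{per voter}. Write $\mathbb{E}_{b\sim D}[\tfrac1n|a\succ b|]=\tfrac1n\sum_v \rank_v(a;D)$ where $\rank_v(a;D)=\Pr_{b\sim D}[a\succ_v b]$. From voter $v$'s perspective, the samples $b_1,\dots,b_k\sim D^*$ correspond (via the probability integral transform along $v$'s linear order) to i.i.d.\ uniforms $X_1,\dots,X_k\in[0,1]$, and $\rank_v(a;\hat D)-\rank_v(a;D^*)$ is bounded, \emph{uniformly over all candidates $a$}, by the one-sided Kolmogorov--Smirnov statistic $\sup_r(\hat F(r)-r)$. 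Massart's sharp DKW inequality gives $\mathbb{E}[\sup_r(\hat F(r)-r)]\le(1+o(1))\sqrt{\pi/(8k)}$---this is where the constant $\pi/8$ comes from, not a Gaussian tail on $\hat f_a$. Averaging this per-voter bound over $v$ (linearity of expectation) and applying the probabilistic method yields a realization with $\tfrac1n\sum_v\max_a(\rank_v(a;\hat D)-\rank_v(a;D^*))\le\eps$, which immediately gives $\hat f_a\le\tfrac12+\eps$ for every $a$. The linearity of each voter's preference order is exactly the structural feature that collapses the union over $m$ candidates into a single KS statistic per voter.
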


A qualitative interpretation of \Cref{thm:near-dom-sets} is that we can always choose a small committee that only excludes a candidate if it includes one that is \emph{effectively} just as good in the eyes of the voters (for example, to distinguish between the two, one would need to sample a large number of voters' preferences). It is also worth noting that for the stronger version of the result, the distribution $D$ is quite simple --- it is just a uniform distribution over a multiset of candidates, whose size is still $(1 + o(1))\frac{\pi}{8\eps^2}$. We can interpret the result as showing that any excluded candidate is barely better than the \textit{average} candidate in the multiset. This framing suggests a plausible normative justification for selecting this set of candidates: every other candidate is reasonably excluded, and each candidate in the set is (literally) pulling its weight.

\Cref{thm:near-dom-sets} can also be thought of as strengthening \cite{alon2006dominating}'s result on elections of quality $\eps$ in three ways. First, our result applies to \emph{all} elections, by relaxing dominating sets to $(\frac12 - \eps)$-dominating sets instead of restricting to elections of quality $\eps$. (Since $(\frac12 -\eps)$-dominating sets and dominating sets are the same in elections of quality $\eps$, their result is a corollary of \Cref{thm:near-dom-sets}.) Second, rather than having to choose a candidate $b \in S$ adaptively against each $a \notin S$, our result even works with a fixed distribution $D$ over the candidates in $S$. Third, our quantitative guarantee is better by a $\log \frac1\eps$ factor.

\subsection{Approximating Maximal Lotteries} 

The heart of our approach is the concept of a \emph{maximal lottery} \cite{kreweras1965aggregation,fishburn1984probabilistic}, a distribution over candidates which is the mixed-strategy Nash equilibrium of the following simple two-player zero-sum game. Given a fixed election, Alice chooses a candidate $a$, Bob chooses a candidate $b$, and then Alice's payoff is half the vote margin of $a$ over $b$ (i.e., $\tfrac1n|a\cg b| - \frac12$ where $\tfrac1n|a\cg b|$ denotes the fraction of voters that prefer $a$ over $b$), and Bob's payoff is the negation.

Maximal lotteries have a rich literature spanning a wide variety of applications and disciplines. As \cite{brandt2017rolling} details, they were first defined by Kreweras in 1965 \cite{kreweras1965aggregation}, studied independently in detail by Fishburn \cite{fishburn1984probabilistic}, but then rediscovered repeatedly in political science \cite{felsenthal1992after}, economics \cite{laffond1993bipartisan}, math \cite{fisher1995tournament},  and computer science \cite{rivest2010optimal}. More recently, maximal lotteries have also played a role in recent breakthroughs in \emph{metric distortion} \cite{charikar2024breaking}, and have garnered significant interest for applications in reinforcement learning and AI alignment \cite{lanctot2023evaluating,wang2023rlhf,munos2023nash,swamy2024minimaximalist, rosset2024direct,gao2024rebel,ye2024theoretical,maura2025jackpot}.

By the symmetry of the game, a maximal lottery has\edit{}{are mixed strategies which} non-negative expected payoff against any pure strategy the opponent chooses. Equivalently, in the voting parlance, we have \Cref{fact:rcw} below, which shows that a maximal lottery can be thought of as a kind of \emph{randomized} Condorcet winner.

\begin{fact}\label{fact:rcw}
In any election, there is a maximal lottery distribution $\DML$ such that for all candidates $a$, in expectation, at most half the voters prefer $a$ over $b\sim \DML$. 
\end{fact}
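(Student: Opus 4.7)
The plan is to identify $\DML$ with an optimal (minimax) mixed strategy in the two-player zero-sum game described above, and to exploit the antisymmetry of its payoff matrix. First, I would let $M \in \mathbb{R}^{m \times m}$ denote Alice's payoff matrix, with $M_{ab} = \tfrac{1}{n}|a \cg b| - \tfrac{1}{2}$. Assuming strict voter preferences, $|a \cg b| + |b \cg a| = n$, so $M$ is antisymmetric: $M_{ab} + M_{ba} = 0$ for all $a, b$, and in particular $x^T M x = 0$ for every distribution $x$.

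Next, I would invoke von Neumann's minimax theorem to conclude that the game has a value $v$, attained by some optimal mixed strategies $p^*$ (for Alice) and $q^*$ (for Bob). To pin down $v = 0$, I would use antisymmetry twice: by optimality of $p^*$, $p^{*T} M q \geq v$ for every mixed $q$, so plugging in $q = p^*$ gives $v \leq p^{*T} M p^* = 0$; symmetrically, $v \geq q^{*T} M q^* = 0$. Hence $v = 0$.

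Finally, I would set $\DML := q^*$. Optimality of $q^*$ then gives $\mathbb{E}_{b \sim \DML}[M_{ab}] \leq 0$ for every pure strategy $a$ of Alice, which rearranges immediately to $\mathbb{E}_{b \sim \DML}\!\big[\tfrac{1}{n}|a \cg b|\big] \leq \tfrac{1}{2}$, as claimed. I don't expect any real obstacle — this is the standard ``symmetric zero-sum games have value zero'' argument applied to the majority game. The one minor subtlety is handling indifferent voters; splitting each tie evenly between the two candidates preserves antisymmetry of $M$ (and preserves the ``at most half'' statement as in \Cref{def:dom-set}), so the argument goes through without change.
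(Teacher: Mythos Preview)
Your proposal is correct and is exactly the argument the paper has in mind: the paper does not give a formal proof of \Cref{fact:rcw} but simply asserts that it follows ``by the symmetry of the game,'' i.e., the standard fact that a symmetric two-player zero-sum game has value zero, which is precisely what you spell out via von~Neumann's minimax theorem. One micro-nitpick: with your definition $M_{ab}=\tfrac1n|a\succ b|-\tfrac12$ and the paper's convention $\tfrac1n|a\succ a|=0$, the diagonal is $M_{aa}=-\tfrac12$, so $M$ is antisymmetric only off the diagonal; this does not break anything (you still get $v\le 0$, which is all the final step needs), and simply setting $M_{aa}:=0$ to match the ``half the vote margin'' description makes the write-up clean.
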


With this game-theoretic correspondence in mind, what  \Cref{thm:near-dom-sets} shows is that we can \emph{approximate} maximal lotteries with constant support. More precisely, we can construct a mixed strategy with support size $O(\frac{1}{\eps^2})$ that guarantees expected payoff at least $-\eps$ in the aforementioned game.

This result contrasts sharply with the approximability of general Nash equilibria with small support, another intensively studied subject \cite{althofer1994sparse,lipton2003playing,feder2007approximating,daskalakis2009note,babichenko2014simple,rubinstein2017settling}. Alth\"ofer \cite{althofer1994sparse} showed that in two-player zero-sum games with $m$ pure strategies, there exists an $\eps$-Nash equilibrium ($\eps$-NE) with support size $O(\frac{1}{\eps^2}\log m)$, and Lipton, Markakis, and Mehta \cite{lipton2003playing} generalized this to all two-player games. These results are quantitatively the same as our guarantee for approximate maximal lotteries, but with an additional $\log m$ factor. In fact, on the surface our approach is similar to that of  \cite{althofer1994sparse} and \cite{lipton2003playing}: sample repeatedly from the Nash equilibrium and argue using concentration bounds that with positive probability, the resulting empirical distribution is an $\eps$-NE.

However, for $\eps$-NE, we cannot get rid of the $\log m$ factor. Feder, Nazerzadeh, and Saberi \cite{feder2007approximating} show that the $O(\frac{1}{\eps^2}\log m)$ bound of \cite{althofer1994sparse,lipton2003playing} is optimal. A corollary of \cite{lipton2003playing} is that one can find an $\eps$-NE in quasi-polynomial time ($m^{O(\log m)}$) via exhaustive search. Rubinstein \cite{rubinstein2017settling} showed that this is also essentially optimal, subject to reasonable complexity-theoretic assumptions on the hardness of \textsf{PPAD}. 

It is worth noting that the dependence on $\log m$ is not because of games that are totally unrelated to the motivations of \Cref{q:main} --- it is \emph{precisely} because tournaments can require dominating sets of $\Omega(\log m)$ size. Consider a game where each player chooses a vertex of the tournament, and the direction of the edge between them determines the winner ($a\to b$ means $a$ wins, if they pick the same vertex it is a tie). If all dominating sets have size greater than $k$, then clearly no strategy that only randomizes between at most $k$ vertices can be close to an equilibrium. Indeed, \cite{feder2007approximating}'s lower bound uses a game with a random 0-1 payoff matrix, essentially equivalent to Erd\H{o}s' random tournaments \cite{erdos1963problem}. In a similar fashion, \cite{anbalagan2015large} construct graphs with no small dominating sets and large girth to show that approximate \emph{well-supported Nash equilibria} also require large supports. %

All together, these connections make it somewhat surprising that maximal lotteries can be approximated with constant support size, without the $\log m$ dependence one might expect (where $m$ is the number of candidates). It is also fortuitous that the construction and proof are simple, and it would be interesting to study whether similar techniques can extend to other specific approximate Nash equilibria of interest. Moreover, given the vast variety of research developing and using maximal lotteries, approximate maximal lotteries may be of independent interest.

\subsection{Further Related work}\label{sec:related}

\paragraph{Intransitive dice.} As mentioned earlier, voting paradoxes are closely related to the counterintuitive phenomenon of \emph{intransitive dice}, the fact that it is possible to assign numbers to the faces of dice such that the first die is more likely to roll higher than the second, the second  is more likely to roll higher than the third, and so on, until finally the last die is more likely to roll higher than the first. Using these dice one can trick an unsuspecting victim into a losing game, offering them an apparent first mover's advantage in picking a die, and then choosing the die that beats their choice. In fact, the winning probability can be made close to $\frac34$ \cite{usiskin1964max,trybula1965paradox}, so rolling the dice a few times, the trickster is almost certain to win more often. 

Intransitive dice were popularized by Martin Gardner in 1970 \cite{gardner1970paradox}, and they raise a broad range of interesting mathematical questions that have been studied extensively \cite{trybula1961paradox,usiskin1964max,trybula1965paradox, conrey2016intransitive, bru2018dice, buhler2018maximally, akin2019generalized, hkazla2020probability,komisarski2021nontransitive, kirkegaard2022emerging, polymath2022probability, kim2024balanced, sah2024intransitive}. 

To see the connection to voting, consider an election where each die is a candidate, each possible sample (rolling all dice together) is a voter $v$, and $v$'s preference is the order of the dice in the sample. The intransitivity of the dice is equivalent to the absence of a Condorcet winner in the corresponding election. Using this correspondence, the fact that dominating sets in elections may be arbitrarily large implies that one can design dice to \emph{simultaneously} trick any number of victims.\footnote{As \cite{alon2006dominating} vividly write, \textit{``much as a chess master puts on a simultaneous exhibition.''}} \cite{alon2006dominating} are motivated in part by understanding the tradeoff between the number of victims and the number of faces on each die. 

Similarly, \Cref{thm:near-dom-sets} has an interesting interpretation in terms of dice. If a trickster wants to handle many victims at once, but beat each of them with probability $50.1\%$ instead of $50\%$, then suddenly the number of victims is bounded, even if any number and manner of dice can be made. If sufficiently many would-be victims coordinate carefully, as they roll the dice, at least one of them must be giving the trickster a run for their money. In fact, if playing many against one seems too unfair, the would-be victims can propose that for each roll, one of them will be chosen at random to play (allowing duplicate/shared dice) and they still have effectively even odds.

\paragraph{Other Condorcet committees.} Through \Cref{def:dom-set,def:cw-sets}, we have seen that there are a variety of possible ways of generalizing the notion of Condorcet winners to committee selection, and it is helpful to contextualize these with a discussion of other proposals that have been considered in the literature. Often, these alternatives are not guaranteed to exist (like Condorcet winners) or may need to an arbitrarily large number of candidates (like $\alpha$-dominating sets for $\alpha \geq \frac12$). Thus, it is exciting to find a new definition that avoids these drawbacks. %

Fishburn \cite{fishburn1981analysis,fishburn1981majority} defined a \emph{majority committee} to be one that is preferred to any other committee of the same size by a majority of voters. This notion requires voters to provide preferences over committees (rather than just candidates), and may not always exist, akin to Condorcet's paradox. Fishburn \cite{fishburn1981majority} studied restricted preference structures that ensure the existence of majority committees. \cite{good1971note,smith1973aggregation} introduced the \emph{Smith set}, which is the minimal set of candidates $S$ such that a majority of voters prefers any $a \in S$ over any $b \notin S$. The Smith set is used to define a stronger version of the \emph{Condorcet criterion}. %

Researchers have also studied the support of maximal lotteries, called the \emph{bipartisan set} \cite{laffond1993bipartisan}. (Note that the bipartisan set is always a subset of the Smith set.) Since the $(\frac12 - \eps)$-dominating sets we construct are the support of approximate maximal lotteries (and obtained by sampling from a maximal lottery), they can be thought of as a kind of sparsified bipartisan set. 

A variety of other sets can be defined by applying graph-theoretic concepts to the tournament graph defined by pairwise majority votes between candidates. We refer the reader to \cite{brandt2016tournament} for a detailed survey.

\paragraph{New work on $\theta$-winning sets.} 
 Nguyen, Song, and Lin \cite{nguyenfew} very recently posted a new preprint, improving \cite{DBLP:journals/corr/CharikarLRVW25}'s bounds on $\theta$-winning sets for all committee sizes $k\geq 2$. %

Their main result is that all elections have Condorcet winning sets of size at most 5. In terms of $\alpha$-dominating sets, their results imply that $k_\alpha = 2$ for $\alpha \leq \frac18 = 0.125$. Intriguingly, unlike \cite{DBLP:journals/corr/CharikarLRVW25}, their result \textit{does} give something marginally stronger for committees of size $3$, showing that $k_\alpha \leq 3$ for $\alpha \leq 0.128$, but the improvements go no further. As we note in \Cref{rmk:small-k}, if we work out our bounds more precisely for small values of $k$, we would get a matching bound for $k = 2$ (which is curious given the difference in techniques), and an improved bound for $k \geq 3$. 

\paragraph{Processes for approximating maximal lotteries.} Brandl and Brandt \cite{brandl2024natural} propose a simple and elegant scheme that approximates maximal lotteries with only lightweight preference elicitation. They imagine an urn  filled with many balls, each arbitrarily labeled with a candidate. In every round two balls are drawn uniformly at random, and a randomly selected voter is compares the corresponding candidates. The ball representing the losing candidate is replaced with a copy of the winning one. To maintain ergodicity, there is also a small probability that a random ball is randomly relabeled. \cite{brandl2024natural} show that with sufficiently many balls, the distribution of labels converges exponentially fast to a maximal lottery.

\subsection{Concurrent Work}

Independent to our work, a new preprint by Bourneuf, Charbit, and Thomass\'e \cite{bourneuf2025dense} also shows that elections have approximately dominating sets of constant size, specifically $(\frac12 - \eps)$-dominating sets of size $O(\frac{1}{\eps^2})$ --- the same as the first part of \Cref{thm:near-dom-sets}. Intriguingly, our results are proved via very different techniques. \cite{bourneuf2025dense}'s approach can be thought of as extending \cite{alon2006dominating}, which argues that majority graphs derived from elections of quality $\eps$ must have bounded VC-dimension. \cite{bourneuf2025dense} adapt majority graphs to allow two kinds of edges (differentiating large and small margins between candidates) and show that an appropriate generalization of the VC-dimension is still bounded using the \textit{dense neighborhood lemma}, a tool whose versatility they showcase through a wide range of applications.

These differing approaches offer distinct advantages. Our approach is somewhat more direct, and gives a sharper quantitative guarantee that also extends to distributions over candidates with small support. On the other hand, \cite{bourneuf2025dense} show that their techniques can be applied to a broader suite of combinatorial structures beyond elections. It would be exciting to understand whether the best of both worlds is possible, and more generally to explore deeper connections between the two approaches.

\section{Preliminaries and Notation}\label{sec:prelims}

\paragraph{Elections.} Formally, an \emph{election} (or \emph{preference profile}) is defined by a tuple $(V, C, \cg_V)$ where $V$ is a set of $n$ voters, $C$ is a set of $m$ candidates, $\cg_V = \{\cg_v\}_{v\in V}$ is a set of linear $n$ orders over $C$, one corresponding to each voter $v \in V$. We say that $a\cg_v b$ if voter $v$ prefers candidate $a$ over candidate $b$. We use 
$$\tfrac1n|a\cg b| := \frac1n\sum_{v\in V} \mathds{1}[a\cg_v b]$$
to denote the fraction of voters that prefer $a$ over $b$. Note that $\mathds{1}[a\cg_v a] = 0$.

\paragraph{Distributions.} We use $D$ (and variants) to denote distributions over candidates, and $x\sim D$ to say that $x$ is sampled from $D$. With a set $S$ we also write $x \sim S$ to say that $x$ is sampled from the \emph{uniform} distribution over $S$. (Specifically, $x\sim [0, 1]$ denotes sampling a uniform real number between $0$ and $1$, and $v \sim V$ denotes sampling a uniformly random voter.) 

\paragraph{Rank.} Lastly, we introduce the notion of \emph{rank}, borrowing from \cite{DBLP:journals/corr/CharikarLRVW25}, a tool which helps make the analysis more simple and intuitive. Given a distribution $D$ over candidates, voter $v$'s rank of $a$ with respect to $D$ is 
$$\rank_v(a; D) := \Prx_{b\sim D} [a\cg_v b].$$
Intuitively, the rank gives a quantitative measure of how much a voter $v$ likes a candidate $a$ by comparing it to the distribution $D$. A helpful fact is that the average rank of $a$ over the voters is precisely the expected fraction of voters that prefer $a$ over $b\sim D$. That is, 
\begin{equation}\label{eq:rank-ev}
\frac1n\sum_{v\in V}\rank_v(a; D) = \Ev_{b\sim D}\big[\tfrac1n|a \cg b|\big].
\end{equation}
Crucially, when $D$ is a maximal lottery, \Cref{fact:rcw} tells us that the average rank of each candidate $a$ is at most $\frac12$.

\section{Proof of \texorpdfstring{\Cref{thm:near-dom-sets}}{Theorem 1}}\label{sec:mainthm}

Before diving into the proof, we first provide some high-level intuition for the analysis, particularly for why it is possible to avoid a log factor blow up. Our goal is to construct a distribution with small support that approximates a maximal lottery. This distribution will simply be the empirical distribution of a number of samples from the maximal lottery. To show that the empirical distribution is not too different from a maximal lottery, we want to argue that for each candidate $a$, the average voter's rank of $a$ with respect to the maximal lottery and with respect to the empirical distribution are not too different. We can make such an argument using concentration inequalities, but if we need to union bound over all $m$ candidates, we will incur a $\log m$ blow up in the number of samples. What saves us is that each voter's ranks of the candidates always follow the linear order of that voter's preferences, so if one candidate's rank changes only slightly, it limits other candidates with close ranks from changing significantly. 
Heuristically, if we want to show that all ranks change by at most $O(\eps)$, it is really sufficient to show this for $O(1/\eps)$ evenly distributed ranks, since this will extend to all ranks that are sandwiched in between. This argument would replace the $\log m$ with a $\log \frac1\eps$. 

As it turns out, this measurement of the most that the ranks with respect to a particular voter can change is characterized by classical statistical machinery that measures the difference between the CDFs of a distribution and its empirical distribution. We can appeal directly to this machinery to get a sharper bound than the heuristic argument above, shaving even the $\log \frac1\eps$ factor. We start the proof with \Cref{lem:DKW}, which distills the technical tool we borrow.

\begin{lemma}\label{lem:DKW}
Given $X_1, \dots, X_k \in [0,1]$, let  $\hat{F}(r) := \frac1k\sum_{i=1}^k \mathds{1}[X_i<r]$. Define 
$$\delta(k) := \Ev_{X_1, \dots, X_k\sim [0, 1]}\left[\sup_{r\in[0,1]}\left(\hat{F}(r) - r\right) \right].$$
Then 
$$\delta(k) \leq (1 + o(1))\sqrt{\frac{\pi}{8k}}.$$
\end{lemma}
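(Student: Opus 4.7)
The plan is to recognize the quantity $\sup_{r \in [0,1]}(\hat{F}(r) - r)$ as the classical one-sided Kolmogorov–Smirnov statistic $D_k^+$ of the uniform empirical distribution, and then apply a sharp tail inequality together with the layer-cake formula $\mathbb{E}[D_k^+] = \int_0^\infty \Pr[D_k^+ > t]\, dt$.

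The first step is a combinatorial simplification. Because $\hat{F}$ is a step function that jumps by $1/k$ at each order statistic $X_{(i)}$, the supremum of $\hat{F}(r) - r$ over the interval $(X_{(i)}, X_{(i+1)})$ is $i/k - r$, which is approached as $r \downarrow X_{(i)}$. Writing it out,
$$\sup_{r \in [0,1]}(\hat{F}(r) - r) \;=\; \max_{1 \le i \le k}\bigl(i/k - X_{(i)}\bigr)_+.$$
This identification is standard but useful to anchor the rest of the argument in the order statistics of a uniform sample, and in particular it makes clear that the supremum is a nonnegative random variable (so the layer-cake representation applies).

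The second and main step is the tail bound. I would invoke the Dvoretzky--Kiefer--Wolfowitz inequality with Massart's sharp constant, which states that $\Pr[\sup_r(\hat{F}(r) - r) > t] \leq e^{-2kt^2}$ for every $t \geq 0$. Then
$$\delta(k) \;=\; \int_0^\infty \Pr\bigl[\sup_r(\hat{F}(r)-r) > t\bigr]\, dt \;\leq\; \int_0^\infty e^{-2kt^2}\, dt \;=\; \sqrt{\frac{\pi}{8k}},$$
which in fact yields the claim even without any $(1+o(1))$ slack. If one would prefer not to quote the sharp-constant version, an alternative is to use Smirnov's exact distributional formula for $D_k^+$ (and its standard asymptotics $\sqrt{k}\,D_k^+ \Rightarrow T$ with $\Pr[T > t] = e^{-2t^2}$, so $\mathbb{E}[T] = \sqrt{\pi/8}$), which gives precisely the stated $(1+o(1))\sqrt{\pi/(8k)}$ bound.

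The main obstacle is really just getting the right constant in the tail. A naive union bound over the $k$ order statistics combined with Hoeffding on $\Pr[X_{(i)} < i/k - t] = \Pr[\mathrm{Bin}(k, i/k - t) \geq i]$ only gives $\Pr[\sup_r(\hat{F}(r)-r) > t] \leq k\, e^{-2kt^2}$, which after integration loses a $\sqrt{\log k}$ factor and is therefore too weak. So the content of the lemma is precisely in obtaining the constant $1$ (rather than $k$) in front of the Gaussian tail; once that is in hand, the stated bound follows from a single Gaussian integral.
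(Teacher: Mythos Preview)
Your approach is essentially the paper's: invoke the one-sided DKW/Massart tail bound and integrate via the layer-cake formula. The one nuance is that Massart's one-sided inequality $\Pr[\sqrt{k}D_k^+ > \lambda] \le e^{-2\lambda^2}$ is only asserted for $\lambda$ above a small threshold (the paper uses $\lambda \ge \min(\sqrt{(\ln 2)/2},\,1.09k^{-1/6})$), and bounding the integrand by $1$ below that threshold is precisely what produces the $(1+o(1))$ factor---so your ``for every $t \ge 0$'' is slightly too strong, but the conclusion survives intact.
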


Note that $F(r) = r$ is the CDF of the uniform distribution, whereas $\hat{F}(r)$ is the CDF of the empirical distribution of the samples $X_1, \dots, X_k$. The largest difference between these two distributions (whose expectation is $\delta(k)$) is closely connected to statistical tests like the Kolmogorov--Smirnov test, and is well understood. Massart's \cite{massart1990tight} improvement of the Dvoretzky--Kiefer--Wolfowitz inequality  gives an essentially tight characterization of the CDF of the largest difference, and \Cref{lem:DKW} is a straightforward consequence of this result.

\begin{proof}[Proof of \Cref{lem:DKW}]
Massart \cite{massart1990tight} showed that for $\lambda \leq \min\Big(\sqrt{\frac{\ln 2}{2}}, 1.09k^{-1/6}\Big)$,
$$\Prx_{X_1, \dots, X_k\sim [0, 1]}\left[\sqrt{k}\sup_{r\in[0,1]}\left(\hat{F}(r) - r\right) > \lambda \right]\leq  e^{-2\lambda^2}.$$
It follows that 
\begin{align*}
\sqrt{k}\cdot \delta(k) &\leq \int_{0}^\infty \Prx_{X_1, \dots, X_k\sim [0, 1]}\left[\sqrt{k}\sup_{r\in[0,1]}\left(\hat{F}(r) - r\right) > \lambda \right]\dif \lambda \\
&\leq 1.09k^{-1/6} + \int_0^\infty e^{-2\lambda^2} \dif \lambda \\
&= 1.09k^{-1/6} + \sqrt{\frac{\pi}{8}}.
\end{align*}
Hence, $\delta(k) \leq (1 + o(1))\sqrt{\frac{\pi}{8k}}.$
\end{proof}

We note that, along the lines of the earlier heuristic argument, one can also prove a weaker bound of $\delta(k) = O\Big(\sqrt{\tfrac{\log k}{k}}\Big)$ with less heavy machinery. We can discretize the interval $[0,1]$ into grid points $r = \frac{1}{k}, \frac{2}{k}, \dots, \frac{k}{k}$, and argue via Hoeffding's inequality that with high probability, $\hat{F}(r)$ is within $O\Big(\sqrt{\tfrac{\log k}{k}}\Big)$ of $r$ for all of these points. Due to the fine discretization and monotonicity of $\hat{F}(\cdot)$, this extends to all $r \in [0, 1]$ with at most an additional $\frac1k$ error.

\vspace{11pt}

We are now ready to prove our main theorem, restated more formally below.

\begin{reptheorem}{thm:near-dom-sets}
In any election, there exists a distribution $D$ whose support size is $(1 + o(1))\frac{\pi}{8\eps^2}$ such that $$\max_{a \in C} \Ev_{b\sim D}\big[\tfrac1n|a \cg b|\big] \leq \frac12 + \eps.$$
\end{reptheorem}

\begin{proof}
Fix an election and let $\DML$ denote the distribution of a maximal lottery. Note that the key property of $\DML$ (\Cref{fact:rcw}) is that for all candidates $a \in C$, 
\begin{equation}\label{eq:ml}
\frac1n\sum_{v\in V}\rank_v(a; \DML) = \Ev_{b\sim \DML}\big[\tfrac1n|a \cg b]\big] \leq \frac12.
\end{equation}

Now suppose we sample $b_1, \dots, b_k \sim \DML$, where $k$ will eventually be $(1 + o(1))\frac{\pi}{8\eps^2}$. We claim that with positive probability, the empirical distribution $\hat{D}$ which samples uniformly from $b_1, \dots, b_k$ satisfies the property required by the theorem. The intuition is to argue that in expectation, for each candidate $a$ and the average voter $v$, $\rank_{v}(a ;\hat{D})$ is not too much larger than $\rank_{v}(a ; \DML)$, and so $\hat{D}$ will satisfy an approximate version of \Cref{eq:ml}. 

From the perspective of each voter $v$, we can view samples from any distribution $D$ over candidates in the following way. We split the interval $[0, 1]$ into disjoint segments corresponding to candidates, and the length of the segment for each candidate $a$ is the probability mass of $a$ in $D$. The segments are ordered by $v$'s preference, with more preferred candidates closer to $1$ and less preferred closer to $0$. Note that $\rank_v(a;D)$ is precisely the minimum of the segment corresponding to $a$. 

To sample $b\sim D$, we can sample a uniform $X\sim [0, 1]$, and take the candidate corresponding to the interval that $X$ lands in. In other words, $b$ is the candidate such that $\rank_v(b;D)$ is maximal but at most $X$. We will view $b_1, \dots, b_k$ as sampled via this process
(where $D\leftarrow \DML$), with corresponding $X_1, \dots, X_k \sim [0, 1]$. (See \Cref{fig:rank-change}.)

\begin{figure}[htbp]
  \centering
  \includegraphics[width=0.8\textwidth]{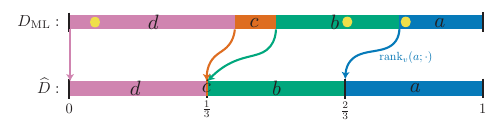}
  \caption{A visual of how the ranks change from $\DML$ to $\hat{D}$ for a particular voter $v$ with ranking $a\cg b\cg c \cg d$. $\DML$ chooses $a, b, c, d$ with probability $0.2,0.3,0.1,0.4$ respectively, and these correspond to the widths of each segment in the top line. The yellow points represent the samples $X_1, X_2, X_3$, and the arrows indicate how the ranks of each candidate change. For example, $\rank_v(a;\DML) = 0.8$ and $\rank_v(a;\hat{D}) = \frac23$.}
  \label{fig:rank-change}
\end{figure}

In this view, the probability that $v$ prefers $a$ over $b\sim \hat{D}$ is precisely the fraction of samples $X_1, \dots, X_k$ that are less than $\rank_v(a;\DML)$. That is,
\begin{equation}\label{eq:emperical-rank}
\rank_{v}(a;\hat{D}) = \frac1k\sum_{i=1}^k \mathds{1}[X_i<\rank_v(a;\DML)].
\end{equation}
Let 
$$\delta_v := \max_{a\in C}\left(\rank_v(a;\hat{D}) - \rank_v(a; \DML)\right)$$
denote the largest increase in rank (with respect to $v$) that any candidate experiences in changing from $\DML$ to $\hat{D}$. Using \Cref{eq:emperical-rank}, 
$$\delta_v \leq \sup_{r \in [0, 1]} \left(\frac1k\sum_{i=1}^k \mathds{1}[X_i<r] - r\right)$$
and using $\delta(k)$ as defined in \Cref{lem:DKW}, we have
$$\Ev_{b_1, \dots, b_k\sim \DML}[\delta_v] \leq \delta(k).$$
By linearity of expectation, we can also bound the \emph{average} increase in ranks across the voters.
$$\Ev_{b_1, \dots, b_k\sim \DML}\left[\frac1n\sum_{v\in V}\delta_v\right] \leq \frac1n\sum_{v\in V} \Ev_{b_1, \dots, b_k\sim \DML}\left[\delta_v\right]\leq \delta(k).$$
By the probabilistic method, there exist $b_1, \dots, b_k$ such that $\frac1n\sum_{v\in V}\delta_v \leq \delta(k)$. Since $\rank_v(a;\hat{D}) \leq \rank_v(a;\DML) + \delta_v$ for each voter $v$ and each candidate $a$, we can use this to bound the average rank of any candidate $a$ with respect to $\hat{D}$, which is precisely the expected fraction of voters that prefer $a$ over $b \sim \hat{D}$. More explicitly,
\begin{align*}
\Ev_{b\sim \hat{D}}[\tfrac1n|a \cg b|] &= \frac1n\sum_{v\in V}\rank_v(a;\hat{D}) \\
&\leq \frac1n\sum_{v\in V} (\rank_v(a;\DML) + \delta_v)\\
&\leq \frac1n\sum_{v\in V} \rank_v(a;\DML) + \frac1n\sum_{v\in V} \delta_v\\
&\leq \frac12 + \delta(k) \\
&\leq \frac12 + (1 + o(1))\sqrt{\frac{\pi}{8k}}. 
\end{align*}
Therefore, taking $k = (1 + o(1))\frac{\pi}{8\eps^2}$, $\hat{D}$ satisfies the conditions of the theorem.
\end{proof}

\begin{remark}\label{rmk:small-k}
The argument in general shows that $k_\alpha \leq k$ for $\alpha \leq \frac12 - \delta(k)$, and this can be used to obtain sharper results for small values of $k$ as well. One can write explicitly that 
$$\delta(k) = k! \int_{0\leq x_1\leq \cdots \leq x_k \leq 1} \max_{j= 1}^k \left(\frac{j}{k} - x_j\right) \dif x_1 \cdots \dif x_k.$$
Evaluating these integrals for $k = 2,3,4$, we can get that $\delta(2) = \frac38, \delta(3) = \frac{17}{54}, \delta(4) = \frac{71}{256}$, which shows that $k_\alpha = 2$ for $\alpha \leq \frac18 = 0.125$,  $k_\alpha \leq 3$ for $\alpha \leq \frac{5}{27} \approx 0.185$, and $k_\alpha \leq 4$ for $\alpha \leq \frac{57}{256} \approx 0.222$. In \Cref{fig:exact}, we plot the result we would get by computing $\delta(k)$ exactly, and compare with the estimate given in \Cref{thm:near-dom-sets}. 
\end{remark}

\begin{figure}
\centering
\begin{tikzpicture}
\begin{axis}[
    width=14cm,
    height=7cm,
    xlabel={$k$},
    ylabel={$\alpha$},
    ymin=0, ymax=0.5,
    xmin=0, xmax=50,
    grid=major,
    legend style={at={(1.125,0.585)}, anchor=north},
    legend cell align=left
]

\addplot[
    only marks,
    mark=*,
    color=blue,
] coordinates {
(2, 0.125)
(3, 0.185185185185185185185185185185)
(4, 0.22265625)
(5, 0.24896)
(6, 0.268775720164609053497942386831)
(7, 0.284418664234897267052236495241)
(8, 0.297186374664306640625)
(9, 0.307871347506352458297578577472)
(10, 0.316989216)
(11, 0.324892179511938219415300285254)
(12, 0.331830263537543700464868160341)
(13, 0.337986618732595379085939763403)
(14, 0.343498949129160032747076956116)
(15, 0.348473090495092387678030110586)
(16, 0.352991929778978807963341068898)
(17, 0.357121446447010960703164460218)
(18, 0.360914913916864092100298572096)
(19, 0.364415889448488200465716970532)
(20, 0.3676603853499774983984375)
(21, 0.370678474205886111613584562403)
(22, 0.373495494821360367241051234313)
(23, 0.376132971327249644639819081626)
(24, 0.378609322834946103092444367465)
(25, 0.380940417875094817334283646195)
(26, 0.383140012256036489382284159054)
(27, 0.385220098280020042259092897139)
(28, 0.387191185797439537773269575433)
(29, 0.389062530301916299891044464237)
(30, 0.390842319482782906634858092693)
(31, 0.39253782690039059671538842675)
(32, 0.394155539427035306783599323023)
(33, 0.395701263592960017286891026637)
(34, 0.397180214848116881709244302618)
(35, 0.398597092894768133829978067181)
(36, 0.399956145591686844509924863827)
(37, 0.401261223426157405875784738253)
(38, 0.402515826157846040263502828391)
(39, 0.403723142931612642847520485039)
(40, 0.404886086914313817677455300465)
(41, 0.406007325318597300947693239473)
(42, 0.407089305523345367799325320532)
(43, 0.408134277877267412610433996906)
(44, 0.409144315672675474013379813798)
(45, 0.410121332695717611098033441051)
(46, 0.411067098693443330290965246149)
(47, 0.411983253044040601667746742245)
(48, 0.412871316872075719982498234091)
(49, 0.413732703813745427287358392964)
(50, 0.414568729606560506253809399864)
};\addlegendentry{$\frac12 - \delta(k)$}

\addplot[
    domain=2:50,
    samples=100,
    thick,
    dashed,
    color=red,
] {0.5 - sqrt(pi / (8 * x))};
\addlegendentry{$\frac{1}{2} - \sqrt{\frac{\pi}{8k}}$}

\end{axis}
\end{tikzpicture}
\caption{A blue point at $(k^*, \alpha^*)$ implies that every election has an $\alpha^*$-dominating set of at most $k^*$ candidates (i.e, $k_\alpha \leq k^*$ for $\alpha \leq \alpha^*$). These points are derived by evaluating $\delta(k)$ directly, and the red line compares to the estimate from \Cref{lem:DKW} that $\delta(k) \approx \sqrt{\frac{\pi}{8k}}$.}\label{fig:exact}. 
\end{figure}

\begin{remark}
While \Cref{thm:near-dom-sets} is stated as an existence result, it is not hard to make it algorithmic and find small approximately dominating sets in expected polynomial time. Firstly, since the maximal lottery is the equilibrium of a two-player $m\times m$ zero-sum game, it can be computed in polynomial time by solving a linear program. The proof samples $k$ times from the maximal lottery, and argues that the result is a $(\frac12 - \delta(k))$-dominating set with positive probability since the expectation of $Z := \frac{1}{n} \sum_{v \in V}\delta_v$ is at most $\delta(k)$. We can instead use Markov's inequality to argue that $Z$ is at most $(1+\eta) \delta(k)$ with probability at least $\frac{\eta}{1+\eta}$. Taking $\eta = \frac1m$ for example, by repeating the sampling process until it succeeds, we can find a $(\frac12 - (1 + o(1))\delta(k))$-dominating set in expected polynomial time.
\end{remark}

\section{Discussion}\label{sec:discussion}

Our work, taken together with the recent line of work on Condorcet winning sets \cite{jiang2020approximately,DBLP:journals/corr/CharikarLRVW25} continues to demonstrate the surprising ``power of many choices'' in voting (much like other settings in theoretical computer science like list decoding and load balancing). The fact that small committees can satisfy such \textit{diverse} desiderata is particularly striking. Moreover, our work also shows how randomized solutions like maximal lotteries, which can sometimes be shied away from due to their nondeterminism,  can be powerful tools for understanding deterministic criteria via the probabilistic method.

We briefly conclude with a couple of directions for future work.

\paragraph{Sharper asymptotic bounds.} While our results show that $k_\alpha$ is finite for all $\alpha < \frac12$, it would be interesting to precisely pin down the asymptotic tradeoff between $\alpha$ and $k_\alpha$. In particular, our best bounds are $ \Omega(\frac{1}{\eps}) \leq k_{\frac12 - \eps} \leq O(\frac{1}{\eps^2})$ by \Cref{thm:near-dom-sets,thm:lb}. We conjecture that the upper bound is correct. 

\paragraph{Approximating Nash equilibria.} We saw that in general, approximate Nash equilibria require support logarithmic in the number of pure strategies, and yet in the particular case of maximal lotteries this blow up can be avoided. Are there other specific types of Nash equilibria that are easy to approximate? While our proof seems to rely strongly on the linear nature of voter preferences, it seems plausible that other types of games may admit similarly advantageous structures.

\anonymize{
\section*{Acknowledgments}
The authors would like to thank Noga Alon for helpful discussions, and for informing us about \cite{bourneuf2025dense}.

Moses Charikar is supported by a Simons Investigator Award. Prasanna Ramakrishnan is supported by Moses Charikar’s Simons Investigator Award and Li-Yang Tan’s NSF awards 1942123, 2211237, 2224246, Sloan Research Fellowship, and Google Research Scholar Award.
}

\bibliography{ref}
\bibliographystyle{alpha}

\appendix

\section{A note on lower bounds}\label{sec:lbs}
In this section, we show that the construction of \cite{alon2006dominating} can be used to show $k_{\frac12 - \eps} \geq \Omega(\frac1\eps)$. For our purposes the parameters need to be set slightly differently, but otherwise the construction is the same.

\begin{theorem}\label{thm:lb}
There exist elections such that for any set $S$ of $k$ candidates, there exists $a \notin S$ such that for all $b\in S$
$$\tfrac1n| a\cg b| \geq \frac12 + \Omega\left(\frac{1}{k}\right).$$
\end{theorem}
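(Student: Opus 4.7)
My plan is to adapt the random election construction of \cite{alon2006dominating}, adjusting the parameters to produce margins of $\Omega(1/k)$. Specifically, I would take $n = 2k+1$ voters (odd, so strict majority is well-defined) and $m = 2^{\Theta(k)}$ candidates, with each voter's preference independently drawn as a uniformly random linear order over the candidates. Since $n$ is odd, $a$ beating $b$ by strict majority corresponds exactly to $\tfrac{1}{n}|a \cg b| \geq \tfrac{k+1}{2k+1} = \tfrac{1}{2} + \Omega(1/k)$, so the theorem follows if, with positive probability, every $k$-subset $S$ admits some $a \notin S$ that beats every $b \in S$ by strict majority.

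Fix $S$ with $|S| = k$ and $a \notin S$, and let $E_a$ denote the event that $a$ strictly beats every $b \in S$. For each fixed $b \in S$, $\Pr[a \cg b \text{ by strict majority}] = \tfrac{1}{2}$ by binomial symmetry. The events across different $b \in S$ are positively correlated in each voter's preference, since each is a monotone increasing function of $a$'s position in that voter's ranking; so by the FKG inequality applied per voter (and multiplied across voters), $\Pr[E_a] \geq 2^{-k}$.

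To combine across candidates $a \notin S$, I would reformulate the random election via the i.i.d.\ rank model: draw $r_{v,c} \sim U[0,1]$ independently for each $(v, c) \in V \times C$ and have each voter sort candidates by decreasing rank. Conditional on $\{r_{v,b}\}_{v,\, b \in S}$, the events $\{E_a\}_{a \notin S}$ become mutually independent, since each $E_a$ depends only on the independent rank sequence $\{r_{v,a}\}_v$. Writing $X := \Pr[E_a \mid \{r_{v,b}\}]$ (so $\mathbb{E}[X] \geq 2^{-k}$ by the previous step), we obtain
\[
\Pr[\text{no } a \notin S \text{ satisfies } E_a] = \mathbb{E}[(1-X)^{m-k}] \leq \Pr[X < 2^{-k-1}] + e^{-(m-k)\cdot 2^{-k-1}}.
\]
Combined with a concentration argument showing $\Pr[X < 2^{-k-1}] = o(2^{-O(k^2)})$, a union bound over $\binom{m}{k} = 2^{O(k^2)}$ choices of $S$ then closes the proof when $m = 2^{Ck}$ for a sufficiently large constant $C$.

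The main obstacle is establishing concentration of $X$ around its exponentially small mean. Changing a single rank $r_{v,b}$ shifts $X$ by at most $O(1/\sqrt{n})$ via a local-CLT argument (flipping one voter's comparison on $b$ only changes $E_a$ when the relevant binomial count sits exactly at the majority boundary), but McDiarmid's inequality alone is too weak at this $2^{-k}$ scale. A sharper concentration analysis, or an alternative second-moment (Paley--Zygmund) argument that exploits the near-independence of different outside candidates to estimate $\Pr[E_a \cap E_{a'}]$, should close the gap, giving the desired lower bound $k_{\frac12 - \eps} \geq \Omega(1/\eps)$.
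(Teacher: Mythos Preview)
Your proposal takes a route entirely different from the paper's, and the gap you yourself flag is fatal to the argument as written.

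The concentration you require, $\Pr[X < 2^{-k-1}] = 2^{-\Omega(k^2)}$, is provably false. Fix $b_1 \in S$ and consider the event that $r_{v,b_1} \geq 0.9$ for every voter $v$; this has probability $(0.1)^{2k+1} = 2^{-\Theta(k)}$. On that event, for $a$ to beat $b_1$ one needs $r_{v,a} > r_{v,b_1} \geq 0.9$ for at least $k+1$ voters, so
\[
X \;\leq\; \Pr\!\big[\mathrm{Bin}(2k+1,\,0.1) \geq k+1\big] \;\leq\; \binom{2k+1}{k+1}(0.1)^{k+1} \;\leq\; 0.2\cdot(0.4)^k \;<\; 2^{-k-1}.
\]
Hence $\Pr[X < 2^{-k-1}] \geq 2^{-\Theta(k)}$, nowhere near the $2^{-\Omega(k^2)}$ that your union bound over $\binom{2^{Ck}}{k}$ subsets demands. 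The obstruction is structural: a \emph{single} member of $S$ can, with probability only singly exponential in $k$, be so well placed across all voters that it alone blocks almost every challenger. This kills your decomposition $\mathbb{E}[(1-X)^{m-k}] \leq \Pr[X<2^{-k-1}] + e^{-(m-k)2^{-k-1}}$, and neither Paley--Zygmund nor any bounded-differences refinement operates at this scale. More broadly, whether the random-election model with $n=2k+1$ voters even achieves minimum dominating-set size $\Theta(k)$ (as opposed to the $\Theta(k/\log k)$ one extracts from \cite{alon2006dominating} as a black box) is itself not established, so the entire plan rests on an unproven premise.

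The paper sidesteps all of this with an \emph{explicit} construction adapted from \cite{alon2006dominating}. Candidates are pairs $(A,B)$ of disjoint subsets of $[t]$ with $|A|=\Theta(t)$ and $|B|=\Theta(\log t)$; there are $2t$ voters whose preferences are engineered so that the margin of $(A,B)$ over $(A',B')$ is exactly $\tfrac{1}{2t}\big(|A\cap B'|-|A'\cap B|\big)$. Given any $k$ candidates $(A_i,B_i)$, one takes $A \supseteq \bigcup_i B_i$ (possible since $k\,|B_i| \leq |A|$), forcing $|A \cap B_i| = b$ for all $i$, and then a short probabilistic argument locates $B$ with $|A_i \cap B| \leq b/2$ for every $i$. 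The resulting margin is $\Omega(b/t) = \Omega(1/k)$. Because the dominating candidate is \emph{constructed} for each $S$, no union bound over $2^{\Omega(k^2)}$ sets is ever needed.
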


\begin{proof}
At a high level, we will construct elections where voters correspond to elements, and candidates correspond to pairs $(A, B)$ of disjoint sets of the elements, where $A$ is big ($\Theta(k)$) and $B$ is small ($\Theta(\log k)$). 
Based on the construction, it will be easy to characterize the fraction of voters that prefer some $(A, B)$ over some $(A', B')$ in terms the sizes of intersections between these four sets. Then, given any $k$ candidates $(A_1, B_1),\dots, (A_k, B_k)$, we can find a candidate $(A, B)$ that dominates each of them by a wide margin by making $A$ a superset of each $B_i$, and choosing $B$ so that it has small intersection with each $A_i$.

\vspace{11pt}

In addition to $k$, the construction has three parameters: $a$, $b$, and $t$. We will eventually set 
$$ a = \frac{t}{200}, \quad  b = \log_2 t, \quad k = \frac{t}{200 \log_2 t}$$
for $t$ sufficiently large.

\vspace{11pt}

The candidates in our election are the pairs $(A, B)$ where $A$ and $B$ are disjoint subsets of $[t]$ of size $a$ and $b$ respectively, and we have $n = 2t$ voters $v_1, v_2, \dots, v_t$ and $u_1, u_2, \dots, u_t$. 
The preferences of the voters (\Cref{fig:prefs}) are built on top of an arbitrary ``default'' order $\cg$ over the candidates and its dual (i.e. the reverse of $\cg$).

\begin{itemize}
    \item Each voter $v_j$ has a ranking that agrees with $\cg$, except they shift $(A, B)$ with $j\in A$ to the very top, and shift $(A, B)$ with $j\in B$ immediately below.

    \item Each voter $u_j$ has a ranking that agrees with the dual of $\cg$, except they shift $(A, B)$ with $j\in B$ to the very bottom, and shift $(A, B)$ with $j\in A$ are immediately above.
\end{itemize}  

Note: in each case, the shifted candidates also ordered according to $\cg$ or its dual.

\begin{figure}[!h]
\centering
\begin{tikzpicture}

\draw (0,1.5) -- (8,1.5);
\node[left] at (0,1.5) {$v_j$};

\draw (1,1.5) ellipse (1 and 0.25);
\node[above] at (1,1.8) {$j \in A$};

\draw (2.5,1.5) ellipse (0.5 and 0.25);
\node[above] at (2.5,1.8) {$j \in B$};

\node at (6,1.5) {\large $\succ$};
\node at (1,1.5) {\large $\succ$};
\node at (2.5,1.5) {\large $\succ$};

\draw (0,0) -- (8,0);
\node[left] at (0,0) {$u_j$};

\draw (7.5,0) ellipse (0.5 and 0.25);
\node[above] at (7.5,0.3) {$j \in B$};

\draw (6.0,0) ellipse (1 and 0.25);
\node[above] at (6.0,0.3) {$j \in A$};

\node at (2.5,0) {\large $\prec$};
\node at (7.6,0) {\large $\prec$};
\node at (6.0,0) {\large $\prec$};
\end{tikzpicture}
\caption{A visual of the preferences for voters $v_j$ and $u_j$. Their favorite candidates are on the left, and their least favorite candidates are on the right. }\label{fig:prefs}
\end{figure}

It is not hard to see that $v_j$ and $u_j$ disagree about each pair of candidates $(A, B)$ and $(A', B')$, except if
\begin{itemize}
    \item  $j \in A\cap B'$, in which case both prefer $(A, B)$ over $(A', B')$, or
    \item  $j \in A'\cap B$, in which case both prefer $(A', B')$ over $(A, B)$. 
\end{itemize}

It follows that the fraction of voters that prefer $(A, B)$ over $(A', B')$ is 
$$\frac12 + \frac{|A\cap B'| - |A'\cap B|}{2t}.$$

Now, consider an arbitrary set of $k$ candidates $(A_1, B_1), \dots, (A_k, B_k)$. The goal is to find a candidate $(A, B)$ that dominates all of these by a large margin. In other words, all $A\cap B_i$ are large, and all $A_i \cap B$ are small.

We will simply choose $A$ to contain $B_1, \dots, B_k$ so that $|A\cap B_i| = b$ for all $i$. This is possible since $kb\leq a$. Next, we want to argue that there is a choice of $B$ such that for all $i$, 
$$|A_i \cap B| \leq \frac{b}{2}.$$
(Here, we point out that \cite{alon2006dominating} are only interested in domination, not the margin of domination, so it is sufficient for them to choose $B$ such that $B\setminus A_i$ is nonempty for each $i$. The fact that we make $|A\cap B'| - |A'\cap B| = \Omega(b)$ increases the margin by a factor of $b$, which is where the log factor improvement comes from.)

We will argue that a random subset $B$ of $[t]\setminus A$ of size $b$ satisfies this property with positive probability. For each $i$, the probability that $|A_i \cap B| > \frac{b}{2}$ is at most (very loosely)
$$\frac{\binom{a}{b/2} \binom{t}{b/2}}{\binom{t}{b}}$$
since there are $\binom{t - a}{b} \leq \binom{t}{b}$ choices for $B$, and the number of these with intersection greater than $b/2$ is bounded by the number of ways to choose $b/2$ elements of $A_j$, and any other $b/2$ elements from $[t]$. Using the well-known approximations $(\frac{n}{k})^k \leq \binom{n}{k}\leq (e\frac{n}{k})^k$, the above expression is at most 
$$\frac{(\frac{2ea}{b}\cdot\frac{2et}{b})^{b/2}}{(\frac{t}{b})^b} = \left(\frac{4e^2a}{t}\right)^{b/2} < 2^{-b}.$$
(We see here that 200 is chosen since it is bigger than $16e^2$.) Since $b = \log_2 t > \log_2 k$, we can use a union bound over the $k$ sets $A_1, \dots, A_k$, to conclude that there exists a choice of $B$ as claimed. Finally, we have that for each $i$, 
\[
\tfrac1n|(A, B)\cg (A_i, B_i)| \geq \frac12 + \frac{b}{4t} = \frac12 + \frac{\log_2 t}{4t} = \frac12 + \frac{1}{800k}. \qedhere
\]
\end{proof}

\end{document}